\documentclass[runningheads]{llncs}

\usepackage[T1]{fontenc}
\usepackage{graphicx}
\usepackage{amsmath}
\usepackage{amssymb}
\usepackage{mathtools}
\usepackage{algorithm}
\usepackage{algpseudocode}
\spnewtheorem{observation}[theorem]{Observation}{\bfseries}{\itshape}
\providecommand{\doi}[1]{} % keep DOI metadata in the .bib without printing DOI lines

\newcommand{\eps}{\varepsilon}
\newcommand{\pref}{\operatorname{pref}}
\newcommand{\suf}{\operatorname{suf}}

\newcommand{\TD}{TD_{\mathrm{NUNC}}}
\newcommand{\TDone}{TD^1_{\mathrm{NUNC}}}

\begin{document}

\title{An Approximation Algorithm for Non-uniform Non-contiguous Translocation Distance}
\titlerunning{An Approximation Algorithm for Non-uniform Non-contiguous Translocation Distance}

\author{Maria Constantin\inst{1} \and Adrian Micl\u{a}u\c{s}\inst{1} \and Alexandru Popa\inst{1, 2}}
\authorrunning{Constantin et al.}
\institute{Department of Computer Science, University of Bucharest, Str. Academiei 14, Bucharest, 010014, Romania \and
National Institute for Research and Development in Informatics, Bulevardul Mareșal Alexandru Averescu 8-10, Bucharest, 011555, Romania}
\maketitle

\begin{abstract}
Translocations are genome rearrangement operations that exchange prefixes of two chromosomes. We study the non-uniform non-contiguous translocation distance problem, where every string produced during the computation remains available for reuse. Given an initial set of strings $A$ and a target set $B$, the objective is to produce all strings in $B$ using as few translocations as possible.

We present the first polynomial-time approximation algorithm for this problem. For a single target string of length $n$, we obtain an $O(\log n)$-approximation, and we extend the result to arbitrary finite target sets with an $O(\log N)$-approximation, where $N$ is the total length of the targets not already present in the initial set. This resolves the approximability question for the non-uniform non-contiguous case left open by Constantin and Popa (TCS 2025).
\end{abstract}

\keywords{Translocation distance \and Approximation algorithms \and Grammar compression \and Straight-line programs}

\section{Introduction}
\label{sec:introduction}

\subsection*{Motivation}

Genome rearrangements are large-scale mutations that modify chromosome organization and are widely used to model evolutionary distance.  Typical operations include inversions, transpositions, duplications, fusions, fissions, and translocations~\cite{Sankoff1992,FertinEtAl2009}.  Algorithmically, one seeks a shortest sequence of allowed rearrangements transforming one genome representation into another.

Classical translocation models represent chromosomes as signed or unsigned permutations of distinct markers~\cite{KececiogluRavi1995,Hannenhalli1996}.  String-based models instead allow repeated symbols and formulate the problem directly on sets of strings~\cite{MartinVideMitrana2004,ConstantinPopa2019}.  A further distinction concerns availability: in the contiguous model, produced copies may be consumed, whereas in the non-contiguous model every produced string remains available and may be reused arbitrarily often.

A tight $2$-approximation is known for the non-uniform contiguous problem with one target~\cite{ConstantinPopa2025}, while NP-hardness was recently established for both non-uniform contiguous and non-contiguous variants~\cite{ConstantinMiclausPopa2026Hardness}.  The approximability of the non-contiguous case remained open.  We resolve this question by exploiting the connection between reusable translocation constructions and compressed string representations.

\subsection*{Informal Problem Definition}

Let $A$ be a finite set of initial strings. Given two available strings $x=x_1x_2$ and $y=y_1y_2$, a non-uniform translocation exchanges the prefixes $x_1$ and $y_1$ and produces $y_1x_2$ and $x_1y_2$. The split positions are arbitrary, and one exchanged prefix may be empty. In a non-contiguous sequence, every string in $A$ and every string produced earlier remains available forever. Given a finite target set $B$, the objective is to produce every string in $B$ using as few translocations as possible.

\subsection*{Previous and Related Work}

Genome rearrangement algorithms have been studied extensively in permutation models (see, e.g.,~\cite{Sankoff1992,FertinEtAl2009}).  For signed translocations, Hannenhalli~\cite{Hannenhalli1996} gave the first polynomial-time algorithm, followed by faster algorithms of Li et al.~\cite{LiQiWangZhu2004}, Wang et al.~\cite{WangZhuLiuMa2005}, and Bergeron, Mixtacki, and Stoye~\cite{BergeronMixtackiStoye2006}.  For unsigned translocations, Kececioglu and Ravi~\cite{KececiogluRavi1995} gave a $2$-approximation, Zhu and Wang~\cite{ZhuWang2006} proved NP-hardness and approximability: $1.75$~\cite{CuiWangZhu2007}, $1.5+\eps$~\cite{CuiWangZhuLiu2008}, $1.408+\eps$~\cite{JiangWangZhuZhu2014}, and $1.375$~\cite{PuZhuJiang2020}.

These classical models differ fundamentally from the string setting.  Permutation models use distinct markers, preserve a current genome configuration, and replace the two input chromosomes after a translocation~\cite{KececiogluRavi1995,Hannenhalli1996}.  In the string model, symbols may repeat within and across chromosomes, the input is a set of strings, and the task is generative: produce a target set from the initial set~\cite{MartinVideMitrana2004,ConstantinPopa2019}.  Most importantly, the non-contiguous model permits unrestricted reuse of every intermediate string~\cite{ConstantinMiclausPopa2026Hardness} . A single intermediate may therefore supply different prefixes or suffixes to many later operations.  Results for permutation translocations and consumption-based contiguous constructions do not directly transfer.

Translocation distance on strings was introduced in~\cite{MartinVideMitrana2004}.  Further exact cases were studied in~\cite{ConstantinPopa2019}, and a tight $2$-approximation for the non-uniform contiguous single-target case was obtained in~\cite{ConstantinPopa2025}.  NP-hardness for both non-uniform contiguous and non-contiguous variants, together with an FPT algorithm parameterized by the target length for one target, was recently shown in~\cite{ConstantinMiclausPopa2026Hardness}.

Our approximation analysis also draws on compressed string representations. Lempel--Ziv parsing~\cite{ZivLempel1977}, macro schemes~\cite{StorerSzymanski1982}, and relative Lempel--Ziv~\cite{KuruppuPuglisiZobel2010} represent repeated substrings by references. Collage systems allow concatenation, truncation, and repetition~\cite{KidaEtAl2003}, while straight-line programs (SLPs) and the smallest-grammar problem use concatenation rules~\cite{CharikarEtAl2005}. Balanced grammar constructions and logarithmic approximations for grammar size were developed in~\cite{Rytter2003,Jez2016}. Related notions of reuse are captured by bidirectional and ordered parsings~\cite{NavarroOchoaPrezza2021} and by string attractors~\cite{KempaPrezza2018}. Most directly, Migita, Uehata, and I~\cite{MigitaUehataI2026} showed that collage systems can be transformed into internal collage systems with only a constant-factor increase in size. We specialize their analysis to the repetition-free composition systems arising from translocation constructions.

\subsection{Our Results}

Our main result is the first polynomial-time approximation algorithm for the non-uniform non-contiguous translocation distance. For a single target string $z$ of length $n$, we obtain an $O(\log n)$-approximation. We further extend the result to arbitrary finite target sets, obtaining an $O(\log N)$-approximation, where $N$ is the total length of the target strings not already present in the initial set.

For the single-target case, the algorithm scans $z$ from left to right and greedily partitions it into factors. At each position, it chooses a longest factor that either occurs as a substring of an initial string or has an earlier non-overlapping occurrence in the already processed prefix of $z$. Each such factor can be extracted from an available string and appended to the constructed prefix using a constant number of 1-translocations.

The approximation analysis relates optimal non-contiguous translocation sequences to compact representations of the target. Starting from an optimal 1-translocation construction, we obtain a system in which strings are represented using concatenations and substrings of previously available strings. We internalize this representation using~\cite{MigitaUehataI2026} and then convert it, with logarithmic overhead, into a straight-line program (SLP), that is, an acyclic grammar that builds the target using binary concatenations. From this grammar we derive a short factorization of the type used by the algorithm, and a stay-ahead argument shows that the greedy factorization uses no more factors.

Combining these results gives an $O(\log n)$-approximation for one target and, by reducing to a concatenated target, an $O(\log N)$-approximation for arbitrary finite target sets. This resolves the approximability question for the non-uniform non-contiguous model left open in~\cite{ConstantinPopa2025}.

\section{Preliminaries}
\label{sec:preliminaries}

Let $\Sigma$ be a finite alphabet. We write $\Sigma^*$ for the set of all finite strings over $\Sigma$, $\Sigma^+=\Sigma^*\setminus\{\epsilon\}$, and $|x|$ for the length of a string $x$. For $1\le i\le j\le |x|$, $x[i..j]$ denotes the substring of $x$ starting at position $i$ and ending at position $j$. For $0\le p\le |x|$, we denote by $\pref_p(x)$ the prefix of $x$ of length $p$, and by $\suf_p(x)$ the suffix of $x$ of length $p$. In particular, $\pref_0(x)=\suf_0(x)=\epsilon$.

\begin{definition}[Translocation]
Let $x,y\in\Sigma^+$ and let $i,j$ be integers such that $0\le i\le |x|$ and $0\le j\le |y|$. Write $x=x_1x_2$ and $y=y_1y_2$, where $|x_1|=i$ and $|y_1|=j$. A \emph{non-uniform translocation} exchanges the prefixes $x_1$ and $y_1$ and produces the strings $u=y_1x_2$ and $v=x_1y_2$. We write
$(x,y)\vdash_{i,j}(u,v)$.
\end{definition}

\paragraph{Example.}
Let $x=\mathtt{abcd}$ and $y=\mathtt{efgh}$. Splitting $x$ after its second symbol and $y$ after its first symbol gives $x_1=\mathtt{ab}$, $x_2=\mathtt{cd}$, $y_1=\mathtt{e}$, and $y_2=\mathtt{fgh}$. Exchanging the two prefixes produces $\mathtt{ecd}$ and $\mathtt{abfgh}$. Thus $(\mathtt{abcd},\mathtt{efgh})\vdash_{2,1}(\mathtt{ecd},\mathtt{abfgh})$.

Thus, the two exchanged prefixes may have arbitrary and possibly different lengths. In particular, either prefix may be empty.

Let $A\subseteq\Sigma^+$ be a finite set of initial strings. In the non-contiguous model, every string in $A$ is available throughout the computation, and every string produced by a translocation remains available after its production and may be reused arbitrarily many times.

A non-contiguous translocation sequence is \emph{$B$-producing}, for a finite set $B\subseteq\Sigma^+$, if every string $z\in B$ either belongs to $A$ or is produced by at least one translocation in the sequence. The \emph{non-uniform non-contiguous translocation distance} $TD_{NUNC}(A,B)$ is the minimum length of a non-contiguous $B$-producing translocation sequence from $A$.

For a set of strings $S$, let $\operatorname{alph}(S)$ denote the set of symbols occurring in its strings. We assume throughout that $\operatorname{alph}(B)\subseteq\operatorname{alph}(A)$. This is exactly the feasibility condition: translocations cannot introduce a new symbol, while under this inclusion each required symbol can be extracted from an initial string, reused arbitrarily often, and concatenated to generate every string in $B$. Hence $TD_{NUNC}(A,B)<\infty$ under our standing assumption.

For the approximation analysis, we use a variant of the operation in which only one of the two outputs is retained.

\begin{definition}[1-translocation]
Let $(x,y)\vdash_{i,j}(u,v)$ be a non-uniform translocation. A \emph{1-translocation} performs the same exchange but retains only one of the two outputs. We write $(x,y)\vdash^1_{i,j}u$ when $u$ is retained and $(x,y)\vdash^1_{i,j}v$ when $v$ is retained.

The minimum length of a non-contiguous $B$-producing sequence of 1-translocations from $A$ is denoted by $TD^1_{NUNC}(A,B)$.
\end{definition}

\paragraph{Example.}
Let $x=\mathtt{abcd}$ and $y=\mathtt{efgh}$. Splitting $x$ after its second symbol and $y$ after its first symbol produces the outputs $\mathtt{ecd}$ and $\mathtt{abfgh}$. A 1-translocation retains only one of them; for example, $(\mathtt{abcd},\mathtt{efgh})\vdash^1_{2,1}\mathtt{abfgh}$.

After possibly exchanging the roles of the two input strings, every output retained by a 1-translocation can be written as $\pref_p(x)\suf_q(y)$, for some $0\le p\le |x|$ and $0\le q\le |y|$.

The two distance measures differ by at most a factor of two.

\begin{lemma}
\label{lem:one-vs-two}
For all finite $A,B\subseteq\Sigma^+$,
$TD_{NUNC}(A,B)\le TD^1_{NUNC}(A,B)\le 2\,TD_{NUNC}(A,B)$.
\end{lemma}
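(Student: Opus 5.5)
Both inequalities follow by simulating one kind of sequence with the other, one operation at a time. In both directions the set of available strings only grows under either model, so replacing operations never makes a later step infeasible.

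For the left inequality $TD_{NUNC}(A,B)\le TD^1_{NUNC}(A,B)$, take an optimal $B$-producing sequence of 1-translocations of length $TD^1_{NUNC}(A,B)$. Replace each 1-translocation $(x,y)\vdash^1_{i,j}w$ by the full translocation $(x,y)\vdash_{i,j}(u,v)$ with the same split positions. The full operation produces both $u$ and $v$, hence in particular $w$. I will argue by induction on the step index that the set of available strings after $k$ full translocations contains the set available after $k$ 1-translocations. The base case is the common set $A$. For the inductive step, the inputs $x,y$ of step $k+1$ are available in the 1-model, hence also in the full model. Consequently every string of $B$ that is produced (or lies in $A$) in the 1-sequence is also produced (or lies in $A$) in the new sequence, which therefore has the same length and is $B$-producing.

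For the right inequality $TD^1_{NUNC}(A,B)\le 2\,TD_{NUNC}(A,B)$, take an optimal sequence of full translocations and replace each $(x,y)\vdash_{i,j}(u,v)$ by the two 1-translocations $(x,y)\vdash^1_{i,j}u$ and $(x,y)\vdash^1_{i,j}v$, performed consecutively. The inputs $x,y$ are still available when the second 1-translocation is performed, because in the non-contiguous model nothing is ever consumed. The same induction then shows that after the $k$-th block of two 1-translocations, exactly the strings available after the $k$-th full translocation are available. Hence the new sequence is $B$-producing and has length $2\,TD_{NUNC}(A,B)$.

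There is no real obstacle; the only point needing care is the bookkeeping induction on availability sets. It is used in both directions to guarantee that every operation's inputs exist at the time the operation is performed. This relies on monotone availability, which is exactly the non-contiguous assumption. Finiteness of both distances follows from the standing assumption $\operatorname{alph}(B)\subseteq\operatorname{alph}(A)$ together with these simulations.
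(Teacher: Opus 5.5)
Your proposal is correct and follows the paper's proof. It simulates each 1-translocation by the full translocation with the same split positions, and each full translocation by the two 1-translocations retaining $u$ and $v$, relying on inputs not being consumed in the non-contiguous model; your explicit induction on availability sets just spells out the bookkeeping the paper leaves implicit.
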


\begin{proof}
Let $S=(s_1,\ldots,s_{TD_{NUNC}(A,B)})$ be an optimal non-contiguous $B$-producing translocation sequence, where
$s_i=(x_i,y_i)\vdash_{p_i,q_i}(u_i,v_i)$.

First, every 1-translocation can be executed as the corresponding ordinary translocation while ignoring the output that is not retained. Hence every non-contiguous $B$-producing sequence of 1-translocations yields a non-contiguous $B$-producing sequence of ordinary translocations of the same length. Therefore,
$TD_{NUNC}(A,B)\le TD^1_{NUNC}(A,B)$.

For the converse inequality, replace every translocation
$s_i=(x_i,y_i)\vdash_{p_i,q_i}(u_i,v_i)$
by the two 1-translocations
$(x_i,y_i)\vdash^1_{p_i,q_i}u_i$
and
$(x_i,y_i)\vdash^1_{p_i,q_i}v_i$.
Because the model is non-contiguous, using $x_i$ and $y_i$ in the first 1-translocation does not consume them, and the same inputs remain available for the second one. Applying this replacement to every operation of $S$ yields a valid non-contiguous $B$-producing sequence of $2\,TD_{NUNC}(A,B)$ 1-translocations. Consequently,
$TD^1_{NUNC}(A,B)\le 2\,TD_{NUNC}(A,B)$. \qed
\end{proof}

We next introduce the compressed representations used only in the approximation analysis.

\begin{definition}[$A$-relative composition system]
An \emph{$A$-relative composition system} is an acyclic collection of non-free variables, each defined by one of the following two types of rules:
$X\to YZ$ or $X\to Y[i..j]$.
Here, the right-hand side may refer either to previously defined non-free variables or to substrings of strings in $A$. Every substring of a string in $A$ is regarded as a free source and is not counted in the size of the system.

Acyclicity means that the non-free variables admit an ordering $X_1,\ldots,X_m$ such that the rule defining $X_r$ refers only to free sources and to variables $X_s$ with $s<r$. A distinguished start variable evaluates to the target string. The size of the system is the number of non-free variables, equivalently the number of non-free rules.
\end{definition}

\begin{definition}[Internal $A$-relative composition system]
An $A$-relative composition system generating a string $z$ is \emph{internal} if every non-free variable occurs untruncated in the derivation of the start variable. Consequently, the complete value of every non-free variable occurs as a substring of $z$, and hence every such variable has length at most $|z|$.
\end{definition}

\begin{definition}[$A$-relative SLP]
An \emph{$A$-relative straight-line program}, or \emph{$A$-SLP}, is an acyclic binary grammar in which every non-free rule has the form $X\to YZ$, while arbitrary substrings of strings in $A$ may be used as free sources. The size of an $A$-SLP is the number of its non-free variables. We denote by $g_A(z)$ the minimum size of an $A$-SLP generating $z$.
\end{definition}

Finally, we introduce the factorization used by our approximation algorithm.

\begin{definition}[$A$-relative non-overlapping factorization]
\label{def:relative-factorization}
Let $z\in\Sigma^+$ and let $z=F_1F_2\cdots F_q$ be a factorization. For every $i\in\{1,\ldots,q\}$, let
$s_i=1+\displaystyle\sum_{h<i}|F_h|$
be the starting position of $F_i$ in $z$.

The factorization is an \emph{$A$-relative non-overlapping factorization} if every factor $F_i$ satisfies at least one of the following conditions:
\begin{enumerate}
\item $F_i$ occurs as a substring of some string in $A$  or
\item there exists a position $p<s_i$ such that
$F_i=z[p..p+|F_i|-1]$
and
$p+|F_i|-1<s_i$.
\end{enumerate}
\end{definition}

In the second case, the earlier occurrence of $F_i$ lies completely before the current occurrence. Hence, when the factors are constructed from left to right, the source of $F_i$ is already contained in the previously constructed prefix of $z$.

\section{The Single-Target Approximation}
\label{sec:single}

Throughout this section $B=\{z\}$, $n=|z|$, and $k=\TDone(A,\{z\})$.  We first describe the greedy factorization algorithm and its implementation, then show how to realize the resulting factors by translocations, and finally bound the number of factors by comparing the algorithm with an optimal reusable construction.

\subsection{The Greedy Factorization Algorithm}
\label{subsec:algorithm}

At a position $s$, let $L_A[s]$ be the length of the longest prefix of $z[s..n]$ occurring as a substring of some string in $A$, and let $L_N[s]$ be the length of the longest prefix having an earlier occurrence ending strictly before $s$. The latter is the \emph{Longest Previous non-overlapping Factor} (LPnF) value at $s$~\cite{CrochemoreTischler2011}. The algorithm chooses a factor of length $\max\{L_A[s],L_N[s]\}$, stores a witnessing source, and advances past the factor (ties are arbitrary).

\begin{algorithm}[t]
\caption{Greedy $A$-relative non-overlapping factorization}
\label{alg:greedy-factorization}
\begin{algorithmic}[1]
\Require Initial set $A$ and target $z=z[1..n]$
\State compute $(L_A[s],\mathrm{src}_A[s])$ and $(L_N[s],\mathrm{src}_N[s])$ for all $s\in\{1,\ldots,n\}$
\State $s\gets1$; $\mathcal F\gets()$
\While{$s\le n$}
    \If{$L_A[s]\ge L_N[s]$}
        \State $\ell\gets L_A[s]$; $\mathrm{src}\gets\mathrm{src}_A[s]$
    \Else
        \State $\ell\gets L_N[s]$; $\mathrm{src}\gets\mathrm{src}_N[s]$
    \EndIf
    \State append $(z[s..s+\ell-1],\mathrm{src})$ to $\mathcal F$
    \State $s\gets s+\ell$
\EndWhile
\State \Return $\mathcal F$
\end{algorithmic}
\end{algorithm}

\begin{lemma}
\label{lem:greedy-correctness-time}
Algorithm~\ref{alg:greedy-factorization} computes the greedy $A$-relative non-overlapping factorization of $z$ in $O(\mathcal L)$ time and space on an integer alphabet, where $\mathcal L=n+\displaystyle\sum_{a\in A}|a|+|A|+1$.
\end{lemma}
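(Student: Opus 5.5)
The proof splits into three parts: correctness of the loop, computation of the $L_A$ values, and computation of the $L_N$ values.

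\textbf{Correctness.} This part is immediate once the two arrays are correct. At each position $s$ we have $z[s]\in\operatorname{alph}(A)$ by the standing assumption, so $L_A[s]\ge 1$. Hence every chosen length satisfies $\ell\ge1$ and the loop terminates. By the definitions of $L_A$ and $L_N$, the factor $z[s..s+\ell-1]$ satisfies condition~1 or condition~2 of Definition~\ref{def:relative-factorization}, with the stored witness as its source. Each choice is a longest admissible factor at its starting position, so the output is the greedy factorization. The loop itself takes $O(n)$ time, because $s$ strictly increases.

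\textbf{Computing $L_A$ and $\mathrm{src}_A$.} We form the string $T=a_1\#_1a_2\#_2\cdots a_m\#_m$, where the $\#_i$ are distinct fresh separators. After rank-reducing the alphabet, this is an integer alphabet of size $O(\mathcal L)$. We build the suffix tree of $T$ in linear time (Farach) and compute the matching statistics of $z$ against $T$. Matching statistics are computed by the standard suffix-link traversal (Chang--Lawler), which runs in $O(n+|T|)$ total time. The value at position $s$ is exactly $L_A[s]$, because the separators prevent any match from crossing a boundary between two strings of $A$. For the witness, each suffix-tree node stores one leaf occurrence. From that occurrence we read off a string $a\in A$ and a position in it. The additive terms $|A|+1$ in $\mathcal L$ account for the separators and for the case $A$ small.

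\textbf{Computing $L_N$ and $\mathrm{src}_N$.} This is the LPnF array of $z$. It can be computed in $O(n)$ time using the linear-time algorithm of Crochemore and Tischler~\cite{CrochemoreTischler2011}. That algorithm uses the suffix array, the LCP array, and previous/next-smaller-value structures, and it also returns a witnessing position $p$ with $p+L_N[s]-1<s$. The non-overlap constraint is the main subtlety. It is already handled by their algorithm, so we cite it rather than reprove it; we only check that it also outputs the source position, or that a source can be recovered in $O(1)$ per entry from the intermediate LPF-type candidates.

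Summing the three parts gives $O(\mathcal L)$ time, and all structures use linear space.
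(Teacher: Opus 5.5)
\textbf{Overall.} Your correctness argument and your treatment of $L_N$ match the paper's. The paper also obtains the LPnF values together with their witnesses purely by citing Crochemore--Tischler, so your hedged remark there is no weaker than the original, although you announce a ``check'' that you never carry out.

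\textbf{Different route for $L_A$.} You index only the initial strings, $T=a_1\#_1\cdots a_m\#_m$, and run $z$ through the suffix tree of $T$ to obtain matching statistics. The paper instead indexes the combined string $W=z\#_0a_1\#_1\cdots a_m\#_m$. It builds the suffix and LCP arrays and marks the suffixes starting inside strings of $A$. It then reads off $L_A[s]$ as the larger LCP (via RMQ) between the suffix at $z[s]$ and its nearest marked neighbours on either side in suffix-array order. Your route needs an index of $A$ alone, which could be reused across many targets. The paper's route uses only arrays and never navigates a tree.

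\textbf{Gap in the time bound.} That last difference matters for the exact bound claimed. Chang--Lawler's $O(n+|T|)$ analysis bounds the number of branching steps, during rescanning and scanning, by $O(n)$, but it charges $O(1)$ per child lookup. Farach's construction on an integer alphabet gives sorted child lists, not constant-time child access. Deterministically your procedure therefore runs in $O(|T|+n\log\sigma)$. Getting $O(1)$ lookups needs hashing, which makes the bound expected rather than worst case. As written, your second part does not quite deliver deterministic $O(\mathcal L)$ time on an integer alphabet.

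\textbf{Fix.} The repair stays within your framework:
\begin{enumerate}
\item Build Farach's generalized suffix tree of $z\#_0T$.
\item Mark the leaves of suffixes starting in $T$.
\item Compute bottom-up which nodes have a marked leaf below them, storing one such leaf as the witness.
\item In one top-down pass, set $L_A[s]$ to the string depth of the deepest such ancestor of the leaf for the suffix starting at $z[s]$.
\end{enumerate}
This is the tree analogue of the paper's nearest-marked-suffix argument and involves no child navigation.
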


\begin{proof}
For correctness, every valid factor starting at $s$ either occurs in an initial string or has an earlier non-overlapping occurrence, so its length is at most $L_A[s]$ or $L_N[s]$, respectively. Conversely, the definitions of these values provide witnessing occurrences, hence a factor of length $\max\{L_A[s],L_N[s]\}$ is valid. The algorithm therefore chooses a longest valid factor at every boundary and outputs exactly the greedy factorization. By the standing alphabet assumption, $L_A[s]\ge1$ for every $s$, so it always advances.

For the running time, write $A=\{a_1,\ldots,a_m\}$. Form $W=z\#_0a_1\#_1\allowbreak\cdots a_m\#_m$ using pairwise distinct separators outside the input alphabet. Build the suffix array and LCP array of $W$~\cite{KarkkainenSanders2003,KasaiEtAl2001} and mark suffixes starting inside strings of $A$. For a suffix starting at $z[s]$, the maximum LCP with a marked suffix is attained by the nearest marked suffix on either side in suffix-array order: a farther suffix uses a larger LCP interval and cannot yield a larger minimum. Two scans find these nearest marked suffixes, and RMQ on the LCP array gives the corresponding values and witnesses in constant time, yielding all $L_A[s]$ in $O(\mathcal L)$ time and space after suffix-array construction. The values $L_N[s]$ are the LPnF table of $z$, computable with witnesses in $O(n)$ time and space from the suffix array~\cite{CrochemoreTischler2011}. After preprocessing, the algorithm spends constant time per factor, hence $O(n)$ in total. Linear-time suffix-array construction~\cite{KarkkainenSanders2003} gives $O(\mathcal L)$ total time and space on an integer alphabet. \qed
\end{proof}

\paragraph{Example.}
For $A=\{a,b,c\}$ and $z=\mathtt{abcabcabc}$, Algorithm~\ref{alg:greedy-factorization} returns $a\mid b\mid c\mid\mathtt{abc}\mid\mathtt{abc}$. The first three factors use sources in $A$. At the fourth, $L_N[4]=3$ since $z[1..3]=z[4..6]$, and after appending it the final $\mathtt{abc}$ again has an earlier non-overlapping source.

\subsection{Realizing the Factorization by Translocations}
\label{subsec:realization}

Let the factorization returned by Algorithm~\ref{alg:greedy-factorization} be $z=F_1\cdots F_q$, and let $P_i=F_1\cdots F_i$, with $P_0=\eps$.  When processing $F_i$, its stored source is either a substring of some $a\in A$ or an earlier non-overlapping occurrence contained completely in $P_{i-1}$. Hence the source string is already available.

\begin{lemma}
\label{lem:extract}
If a string $w$ is available, every non-empty substring $w[i..j]$ can be made available using at most two 1-translocations.
\end{lemma}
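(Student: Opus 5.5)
The plan is to cut the unwanted prefix and the unwanted suffix off $w$ separately, one 1-translocation each. A 1-translocation on inputs $x$ and $y$ can retain $\pref_p(x)\suf_q(y)$ for any $0\le p\le |x|$ and $0\le q\le |y|$. We always use $w$ itself, or a string derived from it, as both inputs; this is allowed because in the non-contiguous model $w$ stays available and may be reused.

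\emph{First step: remove the tail.} Apply a 1-translocation to the pair $(w,w)$ keeping $\pref_j(w)\suf_0(w)=w[1..j]$. In terms of the definition, take $x=w$ split at position $j$, so $x_1=w[1..j]$. Take $y=w$ split at position $|w|$, so $y_1=w$ and $y_2=\eps$. The retained output is $v=x_1y_2=w[1..j]$. If $j=|w|$, this step can be skipped.

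\emph{Second step: remove the head.} Let $w'=w[1..j]$. Apply a 1-translocation to $(w',w')$ keeping $\pref_0(w')\suf_{j-i+1}(w')=w[i..j]$. Concretely, take $x=w'$ with split $0$, so $x_1=\eps$. Take $y=w'$ with split $i-1$, so $y_2=w[i..j]$. The retained output is $v=x_1y_2=w[i..j]$. If $i=1$, this step can be skipped.

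Since $i\le j$, the substring $w[i..j]$ is non-empty, and every intermediate string has length at least $j-i+1\ge 1$, so all strings stay in $\Sigma^+$ as the definition requires. This gives at most two 1-translocations.

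The only step that needs checking is the convention in the definition: it requires $x,y\in\Sigma^+$, and it is not stated outright that $x$ and $y$ may be the same string. I will argue that self-pairing is permitted because the model places no distinctness requirement and the strings are not consumed. If self-pairing were disallowed, I would instead use any string $a\in A$ as the second input with an extreme split, taking suffix length $0$ or prefix length $0$ from $a$. This gives the same outputs and the same count of two.
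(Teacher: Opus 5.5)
Your proof is correct and follows essentially the same approach as the paper: two 1-translocations, each cutting off one end of $w$ while the other input contributes an empty piece. The paper cuts the head first and uses a helper $a\in A$ as the second input, while you cut the tail first and pair $w$ with itself (keeping the helper as a fallback); this difference is cosmetic.
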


\begin{proof}
Choose any helper $a\in A$.  First split $w$ immediately before position $i$ and split $a$ at position $0$. Retaining the output consisting only of the suffix of $w$ produces $w[i..|w|]$.  From this suffix, retain its prefix of length $j-i+1$ by a second translocation in which the other output contribution is empty.  The result is $w[i..j]$. \qed
\end{proof}

\begin{lemma}
\label{lem:append}
If $x$ and $y$ are available, then $xy$ can be produced by one 1-translocation.
\end{lemma}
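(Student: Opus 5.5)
Since $x$ and $y$ are both available, the plan is to apply a single translocation to the ordered pair $(x,y)$ with split positions chosen so that one output is exactly $xy$. We use the normal form noted after the definition of 1-translocations: every retained output can be written as $\pref_p(x)\suf_q(y)$. So it suffices to choose $p=|x|$ and $q=|y|$ and check that this is realized by a legal non-uniform translocation.

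Concretely, we write $x=x_1x_2$ with $x_1=x$ and $x_2=\eps$, so $i=|x|$. We write $y=y_1y_2$ with $y_1=\eps$ and $y_2=y$, so $j=0$. Both values lie in the allowed ranges $0\le i\le|x|$ and $0\le j\le|y|$. The translocation $(x,y)\vdash_{|x|,0}(u,v)$ then produces
\[
u=y_1x_2=\eps\eps=\eps, \qquad v=x_1y_2=xy .
\]
Retaining $v$ gives the 1-translocation $(x,y)\vdash^1_{|x|,0}xy$, which is a single operation.

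The only point needing care is the empty output $u=\eps$, since strings are required to lie in $\Sigma^+$. This is harmless, because a 1-translocation discards the non-retained output and $u$ never has to become available. The retained output $xy$ is non-empty because $x,y\in\Sigma^+$. If one insists that both outputs of the underlying translocation be non-empty, we can instead swap the roles of the inputs in the normal form and take $\suf_{|y|}(y)$ with prefix $\pref_{|x|}(x)$ from the other side. I do not expect any real obstacle beyond this remark, since the definition explicitly allows an exchanged prefix to be empty.
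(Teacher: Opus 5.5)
Your proof is correct and is exactly the paper's argument: split $x$ after its last symbol ($i=|x|$) and $y$ before its first symbol ($j=0$), then retain the output $v=x_1y_2=xy$. One correction to your closing aside: since $|u|+|v|=|x|+|y|$, the discarded output must be $\eps$ whenever the retained one is $xy$, so swapping the roles of the inputs cannot make both outputs non-empty; this does not affect your proof, because the paper's definition permits empty exchanged prefixes and hence an empty discarded output.
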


\begin{proof}
Split $x$ after its last symbol and $y$ before its first symbol. The resulting 1-translocation retains the output $xy$. \qed
\end{proof}

For the first factor, one can simply extract $F_1$ from its $A$-source and designate it as $P_1$. For every later factor, after extracting $F_i$ we append it to $P_{i-1}$.  We use the convenient uniform bound of three operations per factor.

\begin{lemma}
\label{lem:factorization-realization}
If $z$ has an $A$-relative non-overlapping factorization with $q$ factors, then $z$ can be produced using at most $3q$ 1-translocations.
\end{lemma}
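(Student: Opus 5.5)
The argument is an induction over the factors $F_1,\dots,F_q$, maintaining the invariant that after processing $F_i$ the prefix $P_i=F_1\cdots F_i$ is available, and that at most $3i$ 1-translocations have been used so far. Since every string produced stays available in the non-contiguous model, all of $A$, all earlier prefixes $P_0,\dots,P_{i-1}$, and every extracted factor remain usable at later steps. This permanent availability is all the bookkeeping needs. At the end $P_q=z$ has been produced, which gives the bound $3q$.

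For the base case $i=1$, the only possible source for $F_1$ is condition~1 of Definition~\ref{def:relative-factorization}. Indeed, condition~2 would require a position $p<s_1=1$, which is impossible. So $F_1$ is a substring of some $a\in A$, and Lemma~\ref{lem:extract} makes $F_1=P_1$ available with at most two 1-translocations, so at most $2\le 3$ are used.

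For the inductive step $i\ge 2$, we first make $F_i$ available, splitting into the two cases of Definition~\ref{def:relative-factorization}.
\begin{itemize}
\item If $F_i$ occurs in some $a\in A$, apply Lemma~\ref{lem:extract} to $w=a$.
\item Otherwise there is $p$ with $F_i=z[p..p+|F_i|-1]$ and $p+|F_i|-1<s_i=|P_{i-1}|+1$, so $F_i=P_{i-1}[p..p+|F_i|-1]$ is a substring of the available string $P_{i-1}$. Apply Lemma~\ref{lem:extract} to $w=P_{i-1}$.
\end{itemize}
In either case, at most two 1-translocations make $F_i$ available. Lemma~\ref{lem:append}, applied to $x=P_{i-1}$ and $y=F_i$, then produces $P_i$ with one more 1-translocation. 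This gives at most three per factor, and summing over all factors yields at most $3q$.

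The only delicate points are small technicalities. Lemma~\ref{lem:append} needs both inputs in $\Sigma^+$, which holds because every factor is non-empty. Lemma~\ref{lem:extract} needs a helper in $A$, which exists because $A$ is nonempty under the standing feasibility assumption. If $z\in A$ already, the claim is trivial with zero operations. Finally, the notion of ``produced'' requires $z$ to appear as a retained output of some operation. For $q\ge 2$ this holds through the final append. For $q=1$ it holds because the last extraction retains $F_1=z$, or, if $z\in A$, the sequence is already $B$-producing. I expect the non-overlap condition in case~2 to be the only step needing care, since it guarantees that the source lies inside $P_{i-1}$ rather than in the part of $z$ not yet built.
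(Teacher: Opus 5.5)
Your proof is correct and follows the paper's argument: each factor is extracted in at most two operations via Lemma~\ref{lem:extract}, from either an initial string or the already built prefix $P_{i-1}$, and then appended in one operation via Lemma~\ref{lem:append}, for at most three per factor. Your separate handling of the first factor (condition~2 is vacuous since $s_1=1$) and of the edge cases makes explicit the paper's remark that $F_1$ is simply extracted from its $A$-source and designated as $P_1$.
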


\begin{proof}
When $F_i$ is processed, its source is an available string: either an initial string or the already constructed prefix $P_{i-1}$.  Lemma~\ref{lem:extract} makes $F_i$ available in at most two operations, and Lemma~\ref{lem:append} appends it in one more operation.  Summing over all factors gives at most $3q$ operations. \qed
\end{proof}

Thus it remains only to bound the greedy factor count.

\subsection{From an Optimal Translocation DAG to a Relative Composition System}
\label{subsec:dag-composition}

We first make explicit why an optimal sequence can be viewed as a compact reusable DAG.

\begin{lemma}[Construction-DAG normalization]
\label{lem:dag-normalization}
There exists an optimal 1-translocation sequence for $z$ in which no generated string is produced twice and every generated string has a dependency path to the target.  Hence an optimum of length $k$ is represented by an acyclic construction DAG with exactly $k$ generated vertices.
\end{lemma}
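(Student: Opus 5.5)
Among all optimal 1-translocation sequences for $z$ (each of length $k=\TDone(A,\{z\})$), I will pick one and apply two pruning operations. Each operation removes at least one step, so neither can be applied to an optimal sequence without producing a shorter valid one, which is a contradiction. The argument is therefore a minimality and pruning argument. Fix an optimal sequence $S=(s_1,\ldots,s_k)$ and let $w_t$ be the string retained by $s_t$.

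\emph{Step 1: no string is produced twice.} Suppose $w_t=w_{t'}$ with $t<t'$, or suppose $w_t\in A$. Then I delete $s_{t'}$ (respectively $s_t$). Later operations that used $w_{t'}$ as an input can use the earlier copy $w_t$ (respectively the initial string) instead, because in the non-contiguous model strings are never consumed and the earlier copy is already available at every later time. Target production is also preserved: if $z=w_{t'}$, then $z$ was already produced at step $t$ (or already lies in $A$). The result is a valid $\{z\}$-producing sequence of length $k-1$, contradicting optimality. Hence in an optimal sequence the generated strings are pairwise distinct and none lies in $A$. The only point needing care is the case $z\in A$. Then $k=0$ and the claim is trivial, since the DAG has no generated vertices.

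\emph{Step 2: every generated string reaches the target.} Build the dependency graph on the vertex set $A\cup\{w_1,\ldots,w_k\}$. Step 1 makes these vertices well defined, because each generated string has a unique producing step. Draw arcs from the two inputs of $s_t$ to $w_t$. The graph is acyclic because the inputs of $s_t$ are initial strings or $w_{t'}$ with $t'<t$. Now let $R$ be the set of steps whose output has a directed path to the vertex $z=w_{t^*}$, where $t^*$ is the unique step producing $z$. I keep exactly the steps in $R$, in their original order. Each kept step has inputs that are initial strings or outputs of steps that are themselves ancestors of $z$, hence kept and earlier. So the subsequence is valid and still produces $z$. If some step were not in $R$, the subsequence would be shorter, contradicting optimality. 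Therefore every generated vertex has a path to $z$.

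Combining the two steps, the optimal sequence is represented by an acyclic construction DAG whose generated vertices are exactly $w_1,\ldots,w_k$. These are distinct, so there are exactly $k$ of them, and each has in-degree $2$ from its inputs and lies on a path to $z$. The main obstacle is only the bookkeeping in Step 1: checking that redirecting input references to the earlier copy preserves validity of every later step. This works because availability in the non-contiguous model is monotone in time.
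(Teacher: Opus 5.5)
Your proof is correct and follows essentially the same route as the paper. You delete a repeated production and redirect later uses to the earlier copy, which is valid because availability is monotone in the non-contiguous model; you then discard steps with no dependency path to $z$, and execution order serves as a topological order. Your framing as a contradiction with optimality is slightly tighter than the paper's: it makes explicit that neither deletion can occur in an optimum, so exactly $k$ generated vertices remain, and it also rules out a generated string coinciding with an initial string.
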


\begin{proof}
Suppose a string $X$ is produced twice.  Delete the later production and redirect every later use of that copy to the first occurrence of $X$.  This does not change any available input because, in the non-contiguous model, the first $X$ remains available forever.  Repeating the argument removes all duplicate productions.  Next remove any operation whose retained output has no dependency path to $z$. Such an output is irrelevant to the production of the target.  Ordering the remaining generated vertices by their execution time is a topological order, so the dependency graph is acyclic.  There is one generated vertex for each of the $k$ 1-translocations. \qed
\end{proof}

Fix such a normalized optimum.  Each generated string \(X\) has a production \(X=\pref_p(Y)\suf_q(Z)\), where \(Y,Z\) are either initial strings or earlier generated strings and one of the two contributions may be empty.

\begin{lemma}
\label{lem:dag-composition}
A 1-translocation construction DAG with \(k\) generated vertices can be converted into an equivalent \(A\)-relative composition system with \(c\) concatenation rules and \(t\) truncation rules satisfying \(c\le k\), \(t\le k+c\), and hence \(c+t\le3k\).
\end{lemma}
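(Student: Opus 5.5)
Each generated vertex $X$ of the normalized DAG (Lemma~\ref{lem:dag-normalization}) is translated into at most one concatenation rule and at most two truncation rules. The counts are then tallied so that the truncations are charged either to their vertex or to the concatenation created for it.

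Fix a generated vertex $X$ with production $X=\pref_p(Y)\suf_q(Z)$, where $Y,Z$ are initial strings or earlier generated vertices. Suppose first that both contributions are non-empty. The contribution $\pref_p(Y)$ is either a free source, when $Y\in A$, or a truncation rule $X'\to Y[1..p]$, when $Y$ is generated and $p<|Y|$. If $p=|Y|$ it is simply $Y$ itself and costs nothing. The suffix $\suf_q(Z)$ is handled symmetrically by $X''\to Z[|Z|-q+1..|Z|]$. The vertex $X$ is then defined by the concatenation $X\to X'X''$, with $X'$ or $X''$ replaced by the free source or by the untruncated variable when no rule is needed. Suppose instead that one contribution is empty, say $X=\pref_p(Y)$. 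Then no concatenation is used: $X$ is a free source if $Y\in A$, and otherwise a single truncation $X\to Y[1..p]$, identified with $Y$ if $p=|Y|$. The empty-prefix case is symmetric. Acyclicity follows because every rule refers only to free sources, to variables of earlier vertices in the topological order, or to the auxiliary variables $X',X''$ introduced for the same vertex just before $X$. The start variable is the vertex $z$, and by induction along the order each variable evaluates to the string of its vertex.

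For the counting, each vertex contributes at most one concatenation, so $c\le k$. Truncation rules arise at most two per vertex that uses a concatenation and at most one per vertex that does not. Writing $k_2$ for the number of vertices using a concatenation, we have $c=k_2$ and
\[
t\le 2k_2+(k-k_2)=k+k_2=k+c .
\]
Summing gives $c+t\le k+(k+c)\le 3k$.

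The main obstacle is the case in which a generated string turns out to be free, for example $X=\pref_p(Y)$ with $Y\in A$. Such an $X$ should be treated as a free source rather than a non-free variable, so the map from vertices to variables is not injective. This only decreases the counts, so the inequalities survive. One should also check that the definition allows the truncation $X\to Y[i..j]$ applied directly to a generated $Y$. The definition permits it, so the construction is legitimate.
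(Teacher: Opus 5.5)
Your proposal is correct and is essentially the paper's argument: each two-sided vertex becomes one concatenation with at most two helper truncations (omitted when a piece is free or equals the whole source), each one-sided vertex costs at most one truncation, and the tally $t\le 2c+(k-c)=k+c$ gives $c+t\le 3k$. Your remark that a generated string which happens to be free only lowers the counts is a harmless refinement of the paper's ``whenever those pieces are not already free or equal to the full source'' clause.
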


\begin{proof}
Consider one generated vertex \(X=\pref_p(Y)\suf_q(Z)\).  If both contributions are non-empty, introduce the necessary prefix and suffix variables, \(P_X\to Y[1..p]\) and \(S_X\to Z[|Z|-q+1..|Z|]\), whenever those pieces are not already free or equal to the full source, and then use \(X\to P_XS_X\).  Thus every two-sided 1-translocation contributes exactly one concatenation rule and at most two truncation rules.  If one contribution is empty, \(X\) is only a prefix or suffix of one source and needs at most one truncation rule.

Let \(c\) be the number of two-sided operations.  Then \(c\le k\).  Those \(c\) operations contribute at most \(2c\) truncations, while the other \(k-c\) operations contribute at most one each, so \(t\le2c+(k-c)=k+c\).  Consequently \(c+t\le k+2c\le3k\).

The helper variables are inserted immediately before \(X\) in the topological order of the original DAG.  Hence every new rule refers only to earlier variables or free \(A\)-substrings, acyclicity is preserved, and an induction over the topological order shows that every original DAG vertex evaluates to the same string in the composition system. \qed
\end{proof}

\subsection{Internalization}
\label{subsec:internalization}

The system from Lemma~\ref{lem:dag-composition} need not be internal: it may build a long intermediate variable and later use only a truncation of it.  Migita, Uehata, and I~\cite{MigitaUehataI2026} eliminate exactly this phenomenon.  They call a nonterminal reachable if it appears untruncated in the grammar tree. An internal collage system is one in which every nonterminal is reachable.  Their Theorem~3 converts a collage system of size \(m\) into an internal one in \(O(m^2)\) time, and their size analysis gives the explicit published bound \(9m-4m_{\rm tr}\le9m\), where \(m_{\rm tr}\) is the number of original truncation rules~\cite{MigitaUehataI2026}.  Since composition systems are collage systems without repetition rules~\cite{KidaEtAl2003,MigitaUehataI2026}, their case analysis admits a sharper bound in our setting.

\begin{lemma}[Relative internalization]
\label{lem:relative-internalization}
Let an \(A\)-relative composition system have \(c\) concatenation rules and \(t\) truncation rules, \(m=c+t\), and generate \(z\).  It can be transformed in polynomial time into an equivalent internal \(A\)-relative composition system of size at most \(5m-2t=5c+3t\).
\end{lemma}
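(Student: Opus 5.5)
The plan is to follow the internalization procedure of Migita, Uehata, and I~\cite{MigitaUehataI2026} verbatim. We treat the \(A\)-relative composition system as a collage system whose terminals are replaced by free \(A\)-substrings, and we simply omit the repetition-rule cases. First, we observe that free sources play the role of terminal symbols. Truncating a free source \(a[i..j]\) yields another free source \(a[i'..j']\), so such truncations cost nothing. Every step of their construction that treats terminals as atomic, or replaces a truncated terminal by a shorter one, therefore remains valid and free in the relative setting.

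Second, we recall the shape of their transformation. Processing variables in topological order, each non-reachable variable \(X\) is never used whole; it occurs only through truncations \(X[i..j]\). Such an occurrence is pushed down into the right-hand side of \(X\). For \(X\to YZ\), a truncation \(X[i..j]\) becomes either a truncation of \(Y\), a truncation of \(Z\), or a concatenation of a suffix of \(Y\) with a prefix of \(Z\). For \(X\to Y[i..j]\), it composes into a single truncation of \(Y\). The charging scheme in their size analysis assigns each original rule a constant number of new rules: fresh prefix and suffix variables of the children, plus a bridging concatenation. The published bound \(9m-4m_{\mathrm{tr}}\) is obtained by summing per-rule charges over three rule types (concatenation, truncation, repetition). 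In their accounting, the worst per-rule charge, \(9\), arises from repetition rules \(X\to Y^k\). There, a truncation may cover a middle run of copies and require extra variables for powers, partial copies, and their concatenations.

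Third, we redo the per-rule accounting with only the two available rule types. The target is a charge of at most \(5\) per concatenation rule and \(3\) per truncation rule. For \(X\to YZ\), the rules that may survive or be created are: the concatenation itself, or its bridging replacement \(\suf(Y)\pref(Z)\); one suffix variable of \(Y\) and one prefix variable of \(Z\) that are shared by all cross-boundary truncations of \(X\) (by nesting prefixes, as in their argument); and the corresponding prefix and suffix variables of \(X\) itself, inherited by its parents. That totals \(5\). For \(X\to Y[i..j]\), composing truncations never creates a concatenation. The charge is then the rule itself plus at most a prefix version and a suffix version, totalling \(3\). Summing gives \(5c+3t=5m-2t\).

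The polynomial running time is inherited from their \(O(m^2)\) bound, since every operation on free sources is index arithmetic. Equivalence and internality follow from their correctness proof, because no case we use involves repetitions.

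The main obstacle is the third step. We must verify that their charging genuinely decomposes per rule type, so that dropping repetitions lowers the constants to exactly \(5\) and \(3\) rather than merely removing terms from \(9m\). The first approach is to reconstruct their invariant: each variable \(X\) needs at most one new left-truncated and one right-truncated companion. We would then check case by case that arbitrary nested truncations \(X[i..j]\) are realized from these companions using only one extra concatenation per concatenation rule.
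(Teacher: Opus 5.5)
Your first step matches the paper: free $A$-substrings behave like terminals, and redirections that reach an initial string cost nothing. The gap is in the third step, which is where $5m-2t$ actually has to come from, and it rests on a misreading of the published bound. In $9m-4m_{\rm tr}=9(m-m_{\rm tr})+5m_{\rm tr}$, the charge $9$ is paid by \emph{every} non-truncation rule, concatenations included, not by repetition rules alone. So your worry is justified: dropping repetition rules from a per-rule accounting would leave $9c+5t$, not $5c+3t$.

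The published analysis is not a per-rule-type charge. It multiplies a count by a cost:
\begin{itemize}
\item The number of truncations that ever get processed is at most $t+2(m-t)$: the $t$ original ones, plus at most two representatives $S=\suf_s(Y)$ and $P=\pref_p(Z)$ per non-truncation rule. The bound of two holds because all crossing truncations of a variable are handled together in Case~3.
\item Each processing step increases the size by at most $4$, which gives $m+4(t+2(m-t))=9m-4t$.
\end{itemize}
The increase $4$ arises only in Case~2, a truncation spanning several copies in $Y^r$. Case~1 costs nothing and Case~3 costs at most $2$. With no repetition rules you immediately get $m+2(t+2(m-t))=5m-2t$. Per rule this is $1+2\cdot2=5$ for each concatenation (the rule itself plus the later processing of its two representatives) and $1+2=3$ for each truncation (the rule itself plus its own processing).

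Your decomposition of the $5$ does not implement this, and no invariant supports it. Consider the ``prefix and suffix variables of $X$ itself, inherited by its parents''.
\begin{itemize}
\item If these are the representatives of $X$ created when $X$'s parents are processed, they are not bounded by two per $X$. $X$ can be a child of many concatenation rules, and each one creates its own suffix or prefix representative of $X$.
\item In aggregate, these are the same objects you already counted at the parent as ``one suffix variable of $Y$ and one prefix variable of $Z$'', so they are counted twice.
\item Meanwhile, the cost of later processing those representatives, which is where the extra $2$ per representative really goes, never appears in your count.
\end{itemize}
Your totals $5$ and $3$ happen to agree with the correct ones, but the justification does not establish them, and the case check is left undone, as you note. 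The missing idea is to charge per processed truncation rather than per rule. You then need to bound the number of processed truncations by $t+2(m-t)$ and observe that, without repetition rules, each costs at most $2$.
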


\begin{proof}
We follow the internalization algorithm of Migita et al.~\cite{MigitaUehataI2026}.  Process unreachable variables top-down.  Such a variable \(X\) is referenced only through truncation rules: if a reachable concatenation used \(X\) directly, then \(X\) would itself be reachable.  When \(X\to YZ\) and a truncation \(Q\to X[i..j]\) lies wholly inside \(Y\) or wholly inside \(Z\), their Case~1 simply redirects the truncation to the appropriate child, without increasing the number of rules.  If the requested interval crosses the boundary \(Y|Z\), their Case~3 processes all crossing truncations together.  Let \(s\) be the longest suffix requested from \(Y\) and \(p\) the longest prefix requested from \(Z\).  The construction introduces one representative \(S=\suf_s(Y)\) and one representative \(P=\pref_p(Z)\). Every crossing request is then represented by a suffix of \(S\), a prefix of \(P\), and one concatenation.  The net size increase is at most two per truncation removed in this case~\cite{MigitaUehataI2026}.

The only case in their proof with a size increase of four is Case~2, which handles a truncation crossing more than two copies of a repetition rule \(X\to Y^r\)~\cite{MigitaUehataI2026}.  Our composition system contains no repetition rules, so Case~2 never occurs.  The remaining bookkeeping is exactly the one in their size analysis: besides the \(t\) original truncation rules, Case~3 can create only the two representative truncations \(S\) and \(P\) for each original non-truncation rule, hence at most \(2(m-t)\) additional truncations can later require processing~\cite{MigitaUehataI2026}.  Charging the repetition-free worst-case increase \(2\), rather than the general collage-system charge \(4\), gives \(m+2(t+2(m-t))=5m-2t\).

The same argument applies in the relative setting.  If downward redirection reaches an initial string \(a\in A\), the requested substring of \(a\) is used directly as a free source and creates no counted variable.  At termination every non-free variable is reachable, hence appears untruncated in the derivation of \(z\) and therefore evaluates to a substring of \(z\). \qed
\end{proof}

\begin{observation}
\label{obs:direct-cpm}
If one prefers to use only the theorem stated verbatim in~\cite{MigitaUehataI2026}, apply the published bound \(9m\) directly.  Lemma~\ref{lem:dag-composition} gives \(m\le3k\), so this yields an internal system of size at most \(27k\).  The asymptotic approximation ratio is unchanged.
\end{observation}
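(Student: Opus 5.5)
The plan is to compose Lemma~\ref{lem:dag-composition} with the internalization theorem of Migita, Uehata, and I, taken exactly as published. First I fix a normalized optimal 1-translocation sequence for $z$ of length $k=\TDone(A,\{z\})$. By Lemma~\ref{lem:dag-normalization}, it is an acyclic construction DAG with exactly $k$ generated vertices.

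Next I apply Lemma~\ref{lem:dag-composition} to this DAG. This gives an $A$-relative composition system generating $z$ with $c$ concatenation rules and $t$ truncation rules, where $m=c+t\le 3k$.

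I then regard this system as a collage system. Composition systems are collage systems without repetition rules, and the free $A$-substrings play the role of terminals. The one technical point is to justify this identification in the relative setting. I would do it by noting that the construction of Migita et al.\ only rewrites rules top-down and redirects truncations to children. Each free source can be treated as an uncounted terminal block, so a truncation redirected onto a free source simply becomes another free source.

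With this in place, their Theorem~3 applies verbatim. It yields an equivalent internal collage system of size at most $9m-4m_{\rm tr}\le 9m\le 27k$. The construction introduces no repetition rules, since it only adds truncation and concatenation rules, so the output is again a composition system. It is internal in the sense of our definition, because every nonterminal is reachable, meaning it appears untruncated in the derivation of $z$. The polynomial running time is inherited from their $O(m^2)$ bound, although the analysis only needs existence.

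The final point is that the asymptotic ratio is unchanged. Later steps use only that the internal system has size $O(k)$, and $27k$ versus $5m-2t\le 15k$ differ by a constant factor. The main obstacle, mild as it is, is the legitimacy of using their theorem with relative free sources rather than single terminal symbols. I expect to handle it by the uncounted-terminal argument above.
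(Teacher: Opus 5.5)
Your proposal is correct and follows the paper's reasoning: you combine $m\le 3k$ from Lemma~\ref{lem:dag-composition} with the published $9m$ bound, and you handle the relative free sources as in the paper's proof of Lemma~\ref{lem:relative-internalization}, where a truncation redirected onto an $A$-substring becomes an uncounted free source. Your comparison figure $5m-2t\le 15k$ is valid but looser than the paper's $5c+3t\le 11k$; this does not affect the conclusion that the ratio changes only by a constant factor.
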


For the sharper accounting, Lemma~\ref{lem:dag-composition} gives \(c\le k\) and \(t\le k+c\).  Lemma~\ref{lem:relative-internalization} therefore yields \(5c+3t\le5c+3(k+c)=3k+8c\le11k\).  Thus an optimal 1-translocation solution of size \(k\) induces an internal relative composition system of size at most \(11k\).  In particular every non-free variable represents a substring of \(z\) and has length at most \(n\).

\subsection{From an Internal Composition System to an SLP}
\label{subsec:composition-slp}

We now eliminate substring rules at logarithmic cost.  The construction is a persistent balanced version of the classical grammar-balancing viewpoint used in grammar compression~\cite{Rytter2003,CharikarEtAl2005,Jez2016}.

\begin{lemma}
\label{lem:composition-slp}
If an internal $A$-relative composition system of size $h$ generates a string $z$ of length $n$, then one can construct in polynomial time an $A$-SLP for $z$ with $O(h\log(n+1))$ non-free variables.
\end{lemma}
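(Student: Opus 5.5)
We will process the internal composition system in topological order and maintain, for every non-free variable $X$, an $A$-SLP symbol representing $X$ by a balanced derivation tree. The key invariant is the \emph{AVL-grammar} one of Rytter~\cite{Rytter2003}: every constructed SLP symbol carries a height, and children heights differ by at most one. Then a symbol of length at most $n$ has height $O(\log(n+1))$. Free sources (substrings of strings in $A$) are leaves of height $0$, and any substring of a free source is again a free leaf at no cost. Since the system is internal, every non-free variable has length at most $n$, so every balanced tree we build has height $O(\log(n+1))$.

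We use two standard persistent primitives on AVL-grammars. \emph{Concatenation:} given balanced symbols $U,V$, a balanced symbol for $UV$ is produced by walking down the right spine of the taller tree to the matching height and rebalancing with rotations. This creates $O(|h(U)-h(V)|+1)=O(\log(n+1))$ new variables and reuses all old ones, since old variables are never modified. \emph{Substring extraction:} given a balanced symbol $U$ and an interval $[i..j]$, the interval decomposes along the two root-to-leaf paths for positions $i$ and $j$ into $O(\log n)$ maximal full subtrees, plus at most two boundary leaves. A boundary leaf is a free source, so its truncation is free. Concatenating these pieces left to right would naively cost $O(\log^2 n)$. However, the standard AVL argument shows that when the pieces along one path are joined in order of increasing height, the total cost telescopes to $O(\log n)$. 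The two halves are then joined with one more concatenation, so $O(\log(n+1))$ new variables are created.

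With these primitives, a rule $X\to YZ$ is handled by one concatenation of the representatives of $Y$ and $Z$. A rule $X\to Y[i..j]$ is handled by one extraction from the representative of $Y$, or, if $Y$ is a free source, by a free leaf. Each of the $h$ rules therefore adds $O(\log(n+1))$ non-free variables, giving $O(h\log(n+1))$ in total. The start variable yields an $A$-SLP for $z$, and every step runs in time polynomial in $h$ and $n$ when operating on the grammar.

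The main obstacle is the telescoping bound for extraction, since the naive bound gives $O(\log^2 n)$ per rule. I would prove it as follows. Along the left path, the collected right-hand subtrees have non-decreasing heights from bottom to top. Joining them bottom-up, the accumulated tree has height at most one more than the largest piece so far. Each join therefore costs $O(h_{\text{next}}-h_{\text{current}}+1)$, and these costs sum to $O(\log n)$. The right path is symmetric. A second point to check is that internality is genuinely needed: without it, intermediate variables could be exponentially long and the height bound would become $O(\log$ of that length$)$. Internality guarantees every represented string has length at most $n$.
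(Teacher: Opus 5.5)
Your proposal is correct and follows essentially the same route as the paper: persistent AVL-style balanced trees, a join for $X\to YZ$, a split-based extraction for $X\to Y[i..j]$, and internality to keep all heights $O(\log(n+1))$; your telescoping argument makes explicit what the paper compresses into ``$O(1)$ new nodes per level'' for a split. One minor slip: your invariant ``accumulated height is at most one more than the largest piece so far'' can fail when two consecutive hanging subtrees have equal height, but the invariant that the tree accumulated from the pieces below a path node $v$ has height at most $h(v)$ does hold and gives the same telescoping $O(\log n)$ bound.
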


\begin{proof}
Process the composition-system variables in topological order. For every processed variable $X$, let $|X|$ denote the length of the string represented by $X$, and maintain a persistent balanced binary concatenation tree $T_X$ whose yield is this string. Leaves are free substrings of strings in $A$, and every internal tree node is an ordinary binary SLP rule. We maintain height $O(\log(|X|+1))$.

Two standard persistent operations suffice.  $\operatorname{Join}(T_1,T_2)$ returns a balanced tree for the concatenation of the two yields, while $\operatorname{Split}(T,p)$ returns balanced trees for the prefix of length $p$ and the complementary suffix.  Store at every internal node the length of the string represented by its left subtree.  A split follows one root-to-leaf search path determined by these lengths. Subtrees not on this path are shared with the old version, and rebuilding plus rebalancing creates only $O(1)$ new nodes per level.  A join descends along the appropriate spine of the taller tree until the two heights are compatible and rebalances on the return path.  Since every maintained tree has logarithmic height, either operation creates $O(\log(n+1))$ new nodes.  This is the standard balancing principle behind AVL-grammar constructions~\cite{Rytter2003}.  If a split point falls inside a free $A$-substring leaf, that leaf is replaced by the two corresponding substrings of the same initial string, which remain free.

For a concatenation rule $X\to YZ$, set $T_X=\operatorname{Join}(T_Y,T_Z)$.  For a substring rule $X\to Y[i..j]$, first split $T_Y$ after position $j$ and then split the resulting prefix after position $i-1$. The middle tree is $T_X$.  Thus either type of composition rule creates only $O(\log(n+1))$ non-free SLP variables.

Internality is used here: every non-free composition variable is a substring of $z$, hence has length at most $n$, so all maintained trees have logarithmic height in $n$.  Persistence preserves earlier trees, and processing in topological order preserves acyclicity.  Summing the logarithmic overhead over the $h$ rules proves the claim. \qed
\end{proof}

We obtain the following comparison bound.

\begin{corollary}
\label{cor:slp-vs-opt}
For \(B=\{z\}\) and \(n=|z|\), \(g_A(z)=O(\TDone(A,B)\log(n+1))\).
\end{corollary}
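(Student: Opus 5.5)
The corollary follows by composing the three reductions already proved in this section. No new construction is needed, only careful bookkeeping of the constants and one check that every lemma's hypotheses hold at each stage.

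\emph{Step 1 (normalization and DAG).} Let \(k=\TDone(A,\{z\})\). By Lemma~\ref{lem:dag-normalization} there is an optimal 1-translocation sequence whose construction DAG has exactly \(k\) generated vertices, acyclic and all leading to \(z\). If \(z\in A\), then \(g_A(z)=0\), because \(z\) is itself a free source; this degenerate case is handled separately, so assume \(k\ge 1\).

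\emph{Step 2 (composition system).} Lemma~\ref{lem:dag-composition} converts this DAG into an \(A\)-relative composition system generating \(z\). It has \(c\le k\) concatenation rules and \(t\le k+c\) truncation rules, with the start variable the vertex evaluating to \(z\).

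\emph{Step 3 (internalization).} Lemma~\ref{lem:relative-internalization} turns this into an equivalent internal system of size at most \(5c+3t\le 11k\), as computed after Observation~\ref{obs:direct-cpm}. Alternatively, Observation~\ref{obs:direct-cpm} gives the cruder bound \(27k\), which avoids relying on the sharpened case analysis. Either way the internal system has size \(h=O(k)\), and every non-free variable is a substring of \(z\).

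\emph{Step 4 (SLP).} Applying Lemma~\ref{lem:composition-slp} with \(h=O(k)\) yields an \(A\)-SLP for \(z\) with \(O(h\log(n+1))=O(k\log(n+1))\) non-free variables. Since \(g_A(z)\) is by definition the minimum size of such an SLP, the claim \(g_A(z)=O(\TDone(A,B)\log(n+1))\) follows.

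The one step needing real care is the interface in Step 4. Lemma~\ref{lem:composition-slp} needs all variables to have length at most \(n\), so that the balanced trees have height \(O(\log(n+1))\). This is exactly what internality provides, and it is why Step 3 cannot be skipped: the raw system from Step 2 may contain intermediate strings far longer than \(z\), and these would make the logarithm depend on their lengths rather than on \(n\). I would also verify that the start variable is an actual generated variable, so that the SLP has a genuine start symbol. When \(z\) is only a prefix or suffix of a free source, it is produced by a single truncation rule, which remains a non-free variable and is converted by two splits. No other subtlety arises, since every constant along the chain is absolute.
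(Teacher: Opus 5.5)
Your proposal is correct and follows the paper's route. The paper also obtains the corollary by chaining Lemma~\ref{lem:dag-normalization}, Lemma~\ref{lem:dag-composition}, Lemma~\ref{lem:relative-internalization} (size $\le 11k$, or $27k$ via Observation~\ref{obs:direct-cpm}) and Lemma~\ref{lem:composition-slp}, with internality giving the $O(\log(n+1))$ height bound. Your explicit handling of the degenerate case $z\in A$ (where $k=0$ and $g_A(z)=0$) is extra care that the paper leaves implicit.
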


\subsection{From the SLP to a Short Factorization}
\label{subsec:slp-factorization}

The next lemma is the bridge from a reusable grammar to the left-to-right factorization used by the algorithm.

\begin{lemma}
\label{lem:slp-factorization}
For every target $z$, the greedy $A$-relative non-overlapping factorization has at most $g_A(z)+1$ factors.
\end{lemma}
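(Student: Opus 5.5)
Fix a minimum $A$-SLP $G$ for $z$ with $g=g_A(z)$ non-free variables. The plan has three parts: read off from $G$ an $A$-relative non-overlapping factorization with at most $g+1$ factors, and then argue that greedy is never worse than any valid factorization.

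\emph{Factorization from $G$ (LZ-style parse of the derivation tree).} Traverse the derivation tree of $G$ left to right in preorder. When the traversal reaches a node labelled by a non-free variable $X$ whose earlier occurrence has already been fully expanded, it emits the whole yield of $X$ as one factor and does not descend. When it reaches a free leaf, which is a substring of some $a\in A$, it emits that leaf as a factor. Every other node is a first occurrence of its variable, and the traversal descends into its two children.

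This produces a factorization of $z$ whose factors are of the two permitted types:
\begin{itemize}
\item free leaves satisfy condition~1 of Definition~\ref{def:relative-factorization};
\item a repeated variable $X$ has its first occurrence expanded entirely before the current position, and that earlier occurrence is a disjoint subtree, so it ends strictly before the start of the current factor. This gives condition~2.
\end{itemize}

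\emph{Counting the factors.} The expanded nodes are exactly the first occurrences, so there are at most $g$ of them. Each expanded node is binary, and the emitted factors are precisely the leaves of the pruned tree. A binary tree with at most $g$ internal nodes has at most $g+1$ leaves, hence at most $g+1$ factors. If the start variable is itself free, then $z$ is a substring of some $a\in A$, one factor suffices, and the bound still holds.

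\emph{Greedy is no worse (stay-ahead).} Let $z=F'_1\cdots F'_r$ be any valid factorization, with boundaries $t_1=1<t_2<\cdots$. Let the greedy boundaries be $s_1=1<s_2<\cdots$. I claim by induction that $s_i\ge t_i$ for every $i$ where both are defined.

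Suppose $s_i\ge t_i$. If $s_i\ge t_{i+1}$, the claim for $i+1$ is immediate. Otherwise $t_i\le s_i<t_{i+1}$, and the suffix $z[s_i..t_{i+1}-1]$ of $F'_i$ is again valid:
\begin{itemize}
\item a substring of a substring of $a\in A$ is still in $A$;
\item if $F'_i=z[p..p+|F'_i|-1]$ with $p+|F'_i|-1<t_i$, then the corresponding suffix starts at $p'=p+(s_i-t_i)$, lies inside the earlier occurrence, and ends at $p+|F'_i|-1<t_i\le s_i$. It is therefore non-overlapping relative to position $s_i$.
\end{itemize}
Since greedy picks a longest valid factor at $s_i$, we get $s_{i+1}\ge t_{i+1}$. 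Consequently greedy finishes within $r$ factors.

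The main obstacle I expect is the second part of this monotonicity step. Validity has to be closed under taking suffixes while the boundary shifts, and condition~2 is relative to the current start position. The check above shows the shift only helps, since the source end stays below $t_i\le s_i$.
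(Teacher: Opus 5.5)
Your proposal is correct and follows essentially the same route as the paper: prune the derivation tree of an optimal $A$-SLP so that only the first (leftmost) occurrence of each non-free variable is expanded, read off at most $g_A(z)+1$ valid leaves, and then use the stay-ahead induction to show that greedy uses no more factors. The one point you leave implicit is that, in preorder, an earlier-visited node with the same label $X$ cannot be an ancestor of the current node, since that would give $X\Rightarrow^+X$ and contradict acyclicity; the paper invokes exactly this to ensure the earlier occurrence is disjoint and ends strictly before the pruned one.
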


\begin{proof}
Let $G$ be an $A$-SLP for $z$ with $g=g_A(z)$ non-free variables and unfold it into its complete ordered derivation tree.  Every node occurrence spells a contiguous interval of $z$.  For every non-free variable $X$, keep only its leftmost occurrence expanded. Prune every later occurrence of $X$ and regard that occurrence itself as a leaf.  Free $A$-substring leaves are never expanded.

Each non-free variable is expanded at most once, so the pruned binary tree has at most $g$ internal nodes and therefore at most $g+1$ leaves.  Reading these leaves from left to right yields a factorization $z=H_1\cdots H_r$ with $r\le g+1$.

Every leaf is a valid relative non-overlapping factor.  A free leaf is, by definition, a substring of a string in $A$.  Otherwise the leaf is a pruned occurrence of some variable $X$.  Its leftmost occurrence spells the same string and begins earlier.  The two occurrences cannot overlap: intervals of derivation-tree nodes are either disjoint or nested, while nesting two occurrences of the same variable would make one occurrence a proper descendant of the other and hence create a dependency path $X\Rightarrow^+X$, contradicting acyclicity.  Thus the leftmost occurrence ends strictly before the pruned one begins.

It remains to compare the greedy factorization with this factorization.  More generally, let $0=p_0<p_1<\cdots<p_r=n$ be the boundaries of any valid relative non-overlapping factorization and $0=g_0<g_1<\cdots$ the greedy boundaries.  We prove by induction that $g_i\ge p_i$.  Assume $g_{i-1}\ge p_{i-1}$.  If $g_{i-1}\ge p_i$ there is nothing to prove.  Otherwise the greedy starting position lies inside the comparison factor ending at $p_i$.  The suffix of that factor from the greedy start to $p_i$ is still valid: a suffix of an $A$-substring is an $A$-substring, and a suffix of a factor with an earlier non-overlapping source has the corresponding earlier suffix source.  Hence a valid greedy candidate reaches at least $p_i$, and the longest valid candidate reaches no less far.  Therefore $g_i\ge p_i$.  The greedy factorization reaches the end using at most $r\le g+1$ factors. \qed
\end{proof}

Combining the previous results gives the single-target guarantee.

\begin{theorem}
\label{thm:single-target}
For $B=\{z\}$, the greedy relative non-overlapping factorization algorithm constructs $z$ in polynomial time using $O\!\left(\TDone(A,\{z\})\log(n+1)\right)$ 1-translocations.  Consequently it is an $O(\log(n+1))$-approximation for $\TD(A,\{z\})$.
\end{theorem}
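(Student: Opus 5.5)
The plan is to chain the results already established, in the order algorithm, factor count, comparison with the optimum. First, Lemma~\ref{lem:greedy-correctness-time} shows that Algorithm~\ref{alg:greedy-factorization} computes the greedy $A$-relative non-overlapping factorization $z=F_1\cdots F_q$ in linear, hence polynomial, time. Lemma~\ref{lem:factorization-realization} then turns this factorization into a $\{z\}$-producing sequence of at most $3q$ 1-translocations. The underlying extraction and append steps of Lemmas~\ref{lem:extract} and~\ref{lem:append} are explicit. Each factor's stored source is either an initial string or the already built prefix $P_{i-1}$, so the whole sequence can be written down in polynomial time from the stored witnesses.

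Next I bound $q$. Lemma~\ref{lem:slp-factorization} gives $q\le g_A(z)+1$. Corollary~\ref{cor:slp-vs-opt} gives $g_A(z)=O(k\log(n+1))$ with $k=\TDone(A,\{z\})$. That corollary is the composition of Lemma~\ref{lem:dag-normalization}, Lemma~\ref{lem:dag-composition}, Lemma~\ref{lem:relative-internalization} with the size bound $11k$, and Lemma~\ref{lem:composition-slp}. Hence the number of 1-translocations is at most $3(g_A(z)+1)=O(k\log(n+1))+3$.

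The only delicate point is absorbing the additive constant. I split into two cases. If $z\in A$, the target is already present and the algorithm outputs nothing; the empty sequence is optimal, and every approximation guarantee is trivial. If $z\notin A$, then $k\ge1$, so $3=O(k)$, and the total is $O(k\log(n+1))$. Here I use that $\log(n+1)\ge\log 2>0$ for $n\ge1$.

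Finally, I pass to ordinary translocations. Each 1-translocation is executed as an ordinary translocation with the unused output ignored, so the algorithm produces an ordinary sequence of the same length. By Lemma~\ref{lem:one-vs-two}, $k\le 2\,\TD(A,\{z\})$. The output length is therefore $O(\TD(A,\{z\})\log(n+1))$, which gives the $O(\log(n+1))$-approximation ratio. I expect no real obstacle, since the substantive work lies in the cited lemmas. The step deserving care is the bookkeeping of the $+1$ and the $z\in A$ case, so that the ratio stays well defined when the optimum is zero.
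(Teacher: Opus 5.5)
Your proposal is correct and follows the paper's proof: you bound the greedy factor count via Corollary~\ref{cor:slp-vs-opt} and Lemma~\ref{lem:slp-factorization}, realize the factorization with Lemma~\ref{lem:factorization-realization}, and transfer to ordinary translocations via Lemma~\ref{lem:one-vs-two}. Your handling of the additive $+1$ and of the case $z\in A$ (where $k=0$) is a point the paper's proof skips. Note, though, that Algorithm~\ref{alg:greedy-factorization} as written still emits the single factor $z$ in that case, so ``outputs nothing'' needs a trivial pre-check for $z\in A$ rather than following from the algorithm as stated.
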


\begin{proof}
Let $k=\TDone(A,\{z\})$.  Corollary~\ref{cor:slp-vs-opt} and Lemma~\ref{lem:slp-factorization} imply that the greedy factorization has $q=O(k\log(n+1))$ factors.  Lemma~\ref{lem:factorization-realization} realizes it with at most $3q=O(k\log(n+1))$ 1-translocations.  Finally, Lemma~\ref{lem:one-vs-two} gives $k\le 2\TD(A,\{z\})$, and every 1-translocation can be executed as an ordinary translocation without increasing the number of operations. \qed
\end{proof}

\section{Several Target Strings}
\label{sec:multiple}

The single-target result extends with only constant additional loss.  Let $B'=B\setminus A=\{z_1,\ldots,z_r\}$ and let $N=\displaystyle\sum_{i=1}^r|z_i|$.  If $B'=\emptyset$, no operation is needed.  Otherwise let $m=\TDone(A,B)$.  Because one 1-translocation retains only one output, every target in $B'$ must be produced by some operation, and hence $r\le m$.  Define the concatenated target $T=z_1z_2\cdots z_r$.

\begin{lemma}
\label{lem:combined-target}
$\TDone(A,\{T\})\le 2m$.
\end{lemma}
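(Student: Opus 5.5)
Take an optimal non-contiguous $B$-producing sequence of $m$ 1-translocations from $A$. We use it unchanged and then append at most $m$ further 1-translocations that assemble $T$; the total is at most $2m$, which gives $\TDone(A,\{T\})\le 2m$.

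\textbf{Step 1 (run the optimum).} After executing the optimal sequence, every $z_i\in B'$ is available. This holds because $z_i\notin A$, so by the definition of a $B$-producing sequence $z_i$ is produced by some operation. Since the model is non-contiguous, it remains available for the rest of the computation.

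\textbf{Step 2 (assemble $T$).} Apply Lemma~\ref{lem:append} repeatedly. From $z_1$ and $z_2$ produce $z_1z_2$. Then, for each $i$ from $3$ to $r$, produce $z_1\cdots z_i$ from the available prefix $z_1\cdots z_{i-1}$ and the available $z_i$. Each step costs one 1-translocation, so this phase uses $r-1$ operations in total.

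\textbf{Step 3 (count).} We already observed that $r\le m$, since each target in $B'$ needs its own retained output. The total number of operations is therefore $m+r-1\le 2m-1\le 2m$.

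\textbf{Degenerate case.} If $r=1$, then $T=z_1$ is already produced by the optimal sequence, and the bound $m\le 2m$ is immediate.

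The only point needing care is that the strings $z_i$ produced in the optimum remain usable as inputs after the sequence ends. This is exactly the non-contiguous availability assumption, so no step is a genuine obstacle. The one subtlety is to count the targets against $m$ using $B'=B\setminus A$, not $B$, because targets already in $A$ are neither produced nor included in $T$.
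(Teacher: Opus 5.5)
Your proof is correct and is the paper's argument: run an optimal $B$-producing sequence of $m$ 1-translocations, concatenate the $z_i$ with $r-1$ applications of Lemma~\ref{lem:append}, and use $r\le m$ to get $m+r-1\le 2m$. Your separate handling of the case $r=1$ is not needed, but it is harmless.
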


\begin{proof}
Execute an optimal $B$-producing 1-translocation sequence of length $m$.  All $z_i$ are then available.  Concatenating them successively requires $r-1$ further 1-translocations, so $\TDone(A,\{T\})\le m+r-1\le 2m$. \qed
\end{proof}

Apply Theorem~\ref{thm:single-target} to $T$, whose length is $N$.  This constructs $T$ using $O(m\log(N+1))$ operations.  Afterwards every $z_i$ is a substring of the available string $T$ and can be extracted in at most two operations by Lemma~\ref{lem:extract}.  The extraction cost is at most $2r\le 2m$. 

\begin{theorem}
\label{thm:general-target}
For arbitrary finite target sets $B$, there is a polynomial-time algorithm returning a $B$-producing sequence of length $O\!\left(\TD(A,B)\log(N+1)\right)$, where $N=\displaystyle\sum_{z\in B\setminus A}|z|$.
\end{theorem}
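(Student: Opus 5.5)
The plan is to assemble the pieces prepared just before the statement. First, dispose of the trivial case: if $B'=B\setminus A=\emptyset$, the empty sequence is $B$-producing and the algorithm outputs it. Otherwise, set $m=\TDone(A,B)$ and form the concatenated target $T=z_1\cdots z_r$ of length $N$. The algorithm runs Algorithm~\ref{alg:greedy-factorization} on $T$, realizes the resulting factorization via Lemma~\ref{lem:factorization-realization}, and then extracts each $z_i$ from $T$ via Lemma~\ref{lem:extract}. Every step is polynomial: Lemma~\ref{lem:greedy-correctness-time} runs in time linear in $N+\sum_{a\in A}|a|$, and the realization and extraction add $O(q)+O(r)$ explicit operations. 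Note that $T$ satisfies the standing alphabet assumption, since $\operatorname{alph}(T)\subseteq\operatorname{alph}(B)\subseteq\operatorname{alph}(A)$.

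For the length bound, I would apply Theorem~\ref{thm:single-target} to the single target $T$. Its first statement gives $O(\TDone(A,\{T\})\log(N+1))$ 1-translocations to produce $T$. By Lemma~\ref{lem:combined-target}, $\TDone(A,\{T\})\le 2m$, so this phase costs $O(m\log(N+1))$. The extraction phase costs at most $2r$. Here I would use $r\le m$, justified in the text: each 1-translocation retains a single output, so each $z_i\notin A$ needs its own production. Hence the extraction cost is at most $2m$, and the total is $O(m\log(N+1))$, since $\log(N+1)\ge 1$ once $N\ge1$. Every string in $B\cap A$ is already available, and every $z_i$ is produced explicitly, so the output is $B$-producing.

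Finally, convert to ordinary translocations. Each 1-translocation is executed as a full translocation with the extra output ignored, so the sequence length is unchanged. Lemma~\ref{lem:one-vs-two} gives $m\le 2\,\TD(A,B)$, which yields $O(\TD(A,B)\log(N+1))$.

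The only subtle point is the extraction step when $z_i=T$ (that is, $r=1$): no extraction is needed there. In general, Lemma~\ref{lem:extract} applies to any non-empty substring, so nothing breaks. I expect no real obstacle; the main care is making sure the single-target theorem is invoked in its 1-translocation form, rather than the $\TD$ form, so that the $2m$ bound composes correctly.
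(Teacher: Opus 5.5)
Your proposal is correct and follows the paper's proof essentially step for step: you form $T=z_1\cdots z_r$, bound $\TDone(A,\{T\})\le 2m$ via Lemma~\ref{lem:combined-target}, invoke Theorem~\ref{thm:single-target} in its 1-translocation form, pay at most $2r\le 2m$ for the extractions, and convert to ordinary translocations using Lemma~\ref{lem:one-vs-two}. Your additional checks (the empty case $B\setminus A=\emptyset$, the alphabet condition for $T$, and the case $r=1$) are harmless refinements of that same argument.
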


\begin{proof}
Let $m=\TDone(A,B)$. The construction above uses $O(m\log(N+1))$ operations in the 1-translocation model. By Lemma~\ref{lem:one-vs-two}, $m\le 2\TD(A,B)$. Replacing each 1-translocation by the corresponding ordinary translocation therefore yields a $B$-producing sequence of length $O(\TD(A,B)\log(N+1))$. \qed

\end{proof}

\section{Conclusion}
\label{sec:conclusion}

We gave a polynomial-time $O(\log(N+1))$-approximation for non-uniform non-contiguous translocation distance, resolving the approximation question left open by the contiguous-case study of Constantin and Popa~\cite{ConstantinPopa2025}.  The algorithm computes a greedy relative non-overlapping factorization. The analysis passes from an optimal reusable translocation DAG through relative composition systems and internalization to an SLP and finally back to a short factorization.  The transformations before the SLP lose only constant factors, and the logarithmic factor appears in the substring-to-concatenation compilation. It remains open whether a constant-factor approximation is possible and what approximation lower bounds hold.

\renewcommand{\refname}{References}
\bibliographystyle{splncs04}
\bibliography{references}

@article{Sankoff1992,
  title={Gene order comparisons for phylogenetic inference: evolution of the mitochondrial genome.},
  author={Sankoff, David and Leduc, Guillame and Antoine, Natalie and Paquin, Bruno and Lang, B Franz and Cedergren, Robert},
  journal={Proceedings of the National Academy of Sciences},
  volume={89},
  number={14},
  pages={6575--6579},
  year={1992}
}

@book{FertinEtAl2009,
  title={Combinatorics of genome rearrangements},
  author={Fertin, Guillaume and Labarre, Anthony and Rusu, Irena and Tannier, Eric and Vialette, St{\~A} and others},
  year={2009},
  publisher={The MIT Press}
}

@inproceedings{KececiogluRavi1995,
  title={Of mice and men: Algorithms for evolutionary distances between genomes with translocation},
  author={Kececioglu, John D and Ravi, R},
  booktitle={Symposium on discrete algorithms},
  volume={604},
  pages={613},
  year={1995}
}

@article{Hannenhalli1996,
  title={Polynomial-time algorithm for computing translocation distance between genomes},
  author={Hannenhalli, Sridhar},
  journal={Discrete applied mathematics},
  volume={71},
  number={1-3},
  pages={137--151},
  year={1996},
  publisher={Elsevier}
}

@inproceedings{MartinVideMitrana2004,
  title={A new uniform translocation distance},
  author={Mart{\'\i}n-Vide, Carlos and Mitrana, Victor},
  booktitle={Joint IAPR International Workshops on Statistical Techniques in Pattern Recognition (SPR) and Structural and Syntactic Pattern Recognition (SSPR)},
  pages={278--286},
  year={2004},
  organization={Springer}
}

@article{ConstantinPopa2019,
  title={Some remarks on the translocation distance},
  author={Constantin, Maria and Popa, Alexandru},
  journal={Procedia Computer Science},
  volume={159},
  pages={1757--1766},
  year={2019},
  publisher={Elsevier}
}

@article{ConstantinPopa2025,
  title={Exact and approximation algorithms for the contiguous translocation distance problem},
  author={Constantin, Maria and Popa, Alexandru},
  journal={Theoretical Computer Science},
  volume={1026},
  pages={115003},
  year={2025},
  publisher={Elsevier}
}

@misc{ConstantinMiclausPopa2026Hardness,
  author        = {Constantin, Maria and Micl{\u a}u{\c s}, Adrian and Popa, Alexandru},
  title         = {{NP}-Hardness and a Fixed-Parameter Algorithm for Translocation Distance},
  year          = {2026},
  eprint        = {2609.18397},
  archivePrefix = {arXiv},
  primaryClass  = {cs.DS},
  note          = {arXiv:2609.18397}
}

@inproceedings{LiQiWangZhu2004,
  title={A linear-time algorithm for computing translocation distance between signed genomes},
  author={Li, Guojun and Qi, Xingqin and Wang, Xiaoli and Zhu, Binhai},
  booktitle={Annual Symposium on Combinatorial Pattern Matching},
  pages={323--332},
  year={2004},
  organization={Springer}
}

@article{WangZhuLiuMa2005,
  title={An O (n2) algorithm for signed translocation},
  author={Wang, Lusheng and Zhu, Daming and Liu, Xiaowen and Ma, Shaohan},
  journal={Journal of Computer and System Sciences},
  volume={70},
  number={3},
  pages={284--299},
  year={2005},
  publisher={Elsevier}
}

@article{BergeronMixtackiStoye2006,
  title={On sorting by translocations},
  author={Bergeron, Anne and Mixtacki, Julia and Stoye, Jens},
  journal={Journal of Computational Biology},
  volume={13},
  number={2},
  pages={567--578},
  year={2006},
  publisher={SAGE Publications Sage CA: Los Angeles, CA}
}

@article{ZhuWang2006,
  title={On the complexity of unsigned translocation distance},
  author={Zhu, Daming and Wang, Lusheng},
  journal={Theoretical computer science},
  volume={352},
  number={1-3},
  pages={322--328},
  year={2006},
  publisher={Elsevier}
}

@article{CuiWangZhu2007,
  title={A 1.75-approximation algorithm for unsigned translocation distance},
  author={Cui, Yun and Wang, Lusheng and Zhu, Daming},
  journal={Journal of Computer and System Sciences},
  volume={73},
  number={7},
  pages={1045--1059},
  year={2007},
  publisher={Elsevier}
}

@article{CuiWangZhuLiu2008,
  title={A (1.5+ $\varepsilon$)-approximation algorithm for unsigned translocation distance},
  author={Cui, Yun and Wang, Lusheng and Zhu, Daming and Liu, Xiaowen},
  journal={IEEE/ACM Transactions on Computational Biology and Bioinformatics},
  volume={5},
  number={1},
  pages={56--66},
  year={2008},
  publisher={IEEE}
}

@inproceedings{JiangWangZhuZhu2014,
  title={A (1.408+ $\varepsilon$)-approximation algorithm for sorting unsigned genomes by reciprocal translocations},
  author={Jiang, Haitao and Wang, Lusheng and Zhu, Binhai and Zhu, Daming},
  booktitle={International Workshop on Frontiers in Algorithmics},
  pages={128--140},
  year={2014},
  organization={Springer}
}

@article{PuZhuJiang2020,
  title={A 1.375-approximation algorithm for unsigned translocation sorting},
  author={Pu, Lianrong and Zhu, Daming and Jiang, Haitao},
  journal={Journal of Computer and System Sciences},
  volume={113},
  pages={163--178},
  year={2020},
  publisher={Elsevier}
}

@article{ZivLempel1977,
  title={A universal algorithm for sequential data compression},
  author={Ziv, Jacob and Lempel, Abraham},
  journal={IEEE Transactions on information theory},
  volume={23},
  number={3},
  pages={337--343},
  year={1977},
  publisher={IEEE}
}

@article{StorerSzymanski1982,
  title={Data compression via textual substitution},
  author={Storer, James A and Szymanski, Thomas G},
  journal={Journal of the ACM (JACM)},
  volume={29},
  number={4},
  pages={928--951},
  year={1982},
  publisher={ACM New York, NY, USA}
}

@inproceedings{KuruppuPuglisiZobel2010,
  title={Relative Lempel-Ziv compression of genomes for large-scale storage and retrieval},
  author={Kuruppu, Shanika and Puglisi, Simon J and Zobel, Justin},
  booktitle={International Symposium on String Processing and Information Retrieval},
  pages={201--206},
  year={2010},
  organization={Springer}
}

@article{KidaEtAl2003,
  title={Collage system: a unifying framework for compressed pattern matching},
  author={Kida, Takuya and Matsumoto, Tetsuya and Shibata, Yusuke and Takeda, Masayuki and Shinohara, Ayumi and Arikawa, Setsuo},
  journal={Theoretical Computer Science},
  volume={298},
  number={1},
  pages={253--272},
  year={2003},
  publisher={Elsevier}
}

@article{CharikarEtAl2005,
  title={The smallest grammar problem},
  author={Charikar, Moses and Lehman, Eric and Liu, Ding and Panigrahy, Rina and Prabhakaran, Manoj and Sahai, Amit and Shelat, Abhi},
  journal={IEEE Transactions on Information Theory},
  volume={51},
  number={7},
  pages={2554--2576},
  year={2005},
  publisher={IEEE}
}

@article{Rytter2003,
  title={Application of Lempel--Ziv factorization to the approximation of grammar-based compression},
  author={Rytter, Wojciech},
  journal={Theoretical Computer Science},
  volume={302},
  number={1-3},
  pages={211--222},
  year={2003},
  publisher={Elsevier}
}

@article{Jez2016,
  title={A really simple approximation of smallest grammar},
  author={Je{\.z}, Artur},
  journal={Theoretical Computer Science},
  volume={616},
  pages={141--150},
  year={2016},
  publisher={Elsevier}
}

@article{NavarroOchoaPrezza2021,
  title={On the approximation ratio of ordered parsings},
  author={Navarro, Gonzalo and Ochoa, Carlos and Prezza, Nicola},
  journal={IEEE Transactions on Information Theory},
  volume={67},
  number={2},
  pages={1008--1026},
  year={2020},
  publisher={IEEE}
}

@inproceedings{KempaPrezza2018,
  title={At the roots of dictionary compression: string attractors},
  author={Kempa, Dominik and Prezza, Nicola},
  booktitle={Proceedings of the 50th Annual ACM SIGACT Symposium on Theory of Computing},
  pages={827--840},
  year={2018}
}

@InProceedings{MigitaUehataI2026,
  author =	{Migita, Soichiro and Uehata, Kyotaro and I, Tomohiro},
  title =	{{On the Smallest Size of Internal Collage Systems}},
  booktitle =	{37th Annual Symposium on Combinatorial Pattern Matching (CPM 2026)},
  pages =	{31:1--31:14},
  series =	{Leibniz International Proceedings in Informatics (LIPIcs)},
  ISBN =	{978-3-95977-420-8},
  ISSN =	{1868-8969},
  year =	{2026},
  volume =	{369},
  editor =	{Bille, Philip and Prezza, Nicola},
  publisher =	{Schloss Dagstuhl -- Leibniz-Zentrum f{\"u}r Informatik},
  address =	{Dagstuhl, Germany}
}

@article{CrochemoreTischler2011,
  title={Computing longest previous non-overlapping factors},
  author={Crochemore, Maxime and Tischler, German},
  journal={Information Processing Letters},
  volume={111},
  number={6},
  pages={291--295},
  year={2011},
  publisher={Elsevier}
}

@inproceedings{KarkkainenSanders2003,
  title={Simple linear work suffix array construction},
  author={K{\"a}rkk{\"a}inen, Juha and Sanders, Peter},
  booktitle={International colloquium on automata, languages, and programming},
  pages={943--955},
  year={2003},
  organization={Springer}
}

@inproceedings{KasaiEtAl2001,
  title={Linear-time longest-common-prefix computation in suffix arrays and its applications},
  author={Kasai, Toru and Lee, Gunho and Arimura, Hiroki and Arikawa, Setsuo and Park, Kunsoo},
  booktitle={Annual Symposium on Combinatorial Pattern Matching},
  pages={181--192},
  year={2001},
  organization={Springer}
}

\end{document}